\renewcommand{\P}{\mathbb{P}}
\newcommand{\Q}{\mathbb{Q}}
\newcommand{\ieq}{\begin{equation}}
\newcommand{\eeq}{\end{equation}}
\newcommand{\ieqa}{\begin{eqnarray}}
\newcommand{\eeqa}{\end{eqnarray}}
\newcommand{\ieqas}{\begin{eqnarray*}}
\newcommand{\eeqas}{\end{eqnarray*}}
\theoremstyle{plain}
\newtheorem{theorem}{Theorem} [section]
\def\neweq#1{\begin{equation}\label{#1}}
\def\endeq{\end{equation}}
\theoremstyle{definition}
\numberwithin{figure}{section}
\def\F{{\mathcal F}}
\def\F{{\mathcal F}}
\newcommand{\average}{{\mathchoice {\kern1ex\vcenter{\hrule
height.4pt width 6pt depth0pt} \kern-11pt} {\kern1ex\vcenter{\hrule height.4pt width 4.3pt depth0pt} \kern-7pt} {} {} }}
\newcommand{\ofp}{(\Omega, \mathcal{F}, \P)}
\def\ind{{\mathchoice{1\mskip-4mu\mathrm l}{1\mskip-4mu\mathrm l}
{1\mskip-4.5mu\mathrm l}{1\mskip-5mu\mathrm l}}}
\title{Intertemporal Cost-efficient Consumption}
\begin{document}

\author[M. Elizalde, S. Sturm]{Mauricio Elizalde, Stephan Sturm}
\address{}
\email{}

\maketitle


\vskip2mm
\noindent
{\footnotesize

Mauricio Elizalde,
Departamento de Matem\'aticas,
Universidad Aut\'onoma de Madrid. Calle Francisco Tom\'as y Valiente, 7, 28049, Madrid, Spain. 
(e-mail mauricio.elizalde@estudiante.uam.es)

Stephan Sturm,
Department of Mathematical Sciences,
Worcester Polytechnic Institute, 100 Institute Road, Worcester,
MA 06109, USA.
(e-mail: ssturm@wpi.edu)

}
 
\vskip3mm\noindent

\begin{abstract}
We aim to provide an intertemporal, cost-efficient consumption model that extends the consumption optimization inspired by the Distribution Builder, a tool developed by Sharpe, Johnson, and Goldstein. The Distribution Builder enables the recovery of investors' risk preferences by allowing them to select a desired distribution of terminal wealth within their budget constraints. 

This approach differs from the classical portfolio optimization, which considers the agent's risk aversion modeled by utility functions that are challenging to measure in practice. Our intertemporal model captures the dependent structure between consumption periods using copulas. This strategy is demonstrated using both the Black-Scholes and CEV models.

\vspace{5mm}

\end{abstract}

\begin{flushleft}

\textbf{Keywords:} Cost-efficiency, distribution builder, copula, Girsanov theorem, Malliavin derivative, CEV model, Black-Scholes model, Laplace transform.\\
\textbf{Mathematics Subject Classification (2010):} 60H05, 60H07, 60H10, 60H35, 91G10.\\
\textbf{JEL classification:} C02, G11, G11.
\end{flushleft}

\section{Introduction}

In this work, we address the problem of intertemporal optimal consumption, a model to consume an investment aiming to attain desired distributions while minimizing costs. We develop our model in the same spirit of the Distribution Builder, a tool developed by Sharpe \textit{et al.} (refer to \cite{goldstein2008choosing} and \cite{sharpe2006asset}), which is designed  to recover the investor's risk preferences by allowing them to arrange blocks in a screen that represent units of probability to shape their own consumption distribution attained to an initial capital.

Our methodology stands in contrast to classical portfolio optimization, which relies on modeling the risk aversion of the agent through utility functions. However, utility functions can be challenging to measure in practice and under certain circumstances they violate most assumptions (\cite{tversky1975critique}). As a result, our intertemporal approach provides an alternative that avoids the complexities associated with utility functions and offers a more flexible and practical solution for achieving optimal consumption distributions.

In their works \cite{goldstein2008choosing} and \cite{sharpe2006asset}, the authors focused on a single-period model and utilized the Distribution Builder with a finite equiprobable probability space $(\Omega,\mathcal{F},\mathbb{P})$, where $\Omega:=\{\omega_i\}^n_{i=1}$, and $\mathbb{P}(\{\omega_i\})=1/n,i=1,...,n$. 
In this context, the wealth distribution $F$ can be represented as follows
$$F(x)=\dfrac{1}{n}\big\vert\{x_i\vert x_i\leqslant x,i=1,...,n\}\big\vert,$$
where $x$ represents the outcome of a random variable $X$ in the $n$ states respectively.
We associate each $x_i$ of the random variable $X$ with one of the $n$ states of the market. 

The expression of the minimum-cost problem in the equiprobable space is given by the following formula:
$$\min_X \sum^n_{i=1}\xi(\omega_i)X(\omega_i),$$
where $\xi$ is the state price process or price kernel. Consequently, each claim $X$ is priced as $\mathbb{E} [\xi X]= \tilde{\mathbb{E}}[X]$. 

Dybvig (\cite{D88a}, \cite{D88b}) showed that the anticomonotonic property is a sufficient condition for cost-efficiency. Two random variables $X$ and $Y$ on a probability space $(\Omega,\mathcal{F},\mathbb{P})$ are considered comonotone if there exists $\Omega_0$ with $\mathbb{P}(\Omega_0)=1$ such that $(X(\omega)-X(\omega'))(Y(\omega)-Y(\omega'))\geqslant 0$ for any $\omega,\omega'\in\Omega_0$. $X$ and $Y$ are anticomonotone if $X$ and $-Y$ are comonotone. Thus, in the pursuit of finding the random variable that achieves the optimal consumption, we aim for it to be anticomonotonic with the price kernel. This involves assigning the largest consumption outcome to the state with the cheapest price, the second largest consumption outcome to the state with the second cheapest price, and so on.

We preserve the spirit of the distribution builder but in a multiperiod model.
A naive approach to propose an extension of the cost-efficient consumption in an intertemporary way for $N$ periods is letting the investor specify a desired distribution of every consumption period $k$ and iterate the terminal market approach, arranging every distribution $X_k$ to be antimonotonic with the price kernel $\xi_k$:

$$
\min_{X_k} \sum^n_{i=1}\xi_k(\omega_i)X_k(\omega_i), \ 1 \leq k \leq N.
$$
But this is economically not sensible as this approach does not capture comonotonicity between consumption flows. 

We introduce a method that is both sophisticated and uncomplicated for introducing a connection between variables using copulas. Copulas are mathematical functions that describe the relationships between multiple variables, featuring uniform distributions on the range [0, 1]. They are particularly useful for indicating interdependencies between random variables. This approach streamlines the agent's choices by breaking them down into two main components: the selection of individual distributions and the adoption of a copula to express their interrelation.

The exploration of cost-efficient consumption remains a current and evolving subject, with recent research including \cite{JZ08}, where the authors described the optimal portfolio choice of an investor optimizing a Cumulative Prospect Theory objective function in a complete discrete market with a equiprobable probability space,  \cite{BBV14}, a work that provides an explicit representation of the lowest cost strategy to achieve a given payoff distribution in the Black-Scholes market, and \cite{BS2022}, where the authors extend the problem to incomplete markets. They found that the main results from the theory of complete markets still hold in adapted form, and reveals that the optimal portfolio selection for preferences characterized by law-invariance and non-decreasing tendencies toward diversification must be perfect cost-efficient.

This, paper is organized as follows. 
In Section \ref{existence}, we establish the existence of cost-efficiency. Specifically, we prove the existence of the portfolio that attains the minimum cost. This is achieved by considering the given consumption distributions that represent the preferences of the agent.
Section \ref{dependency} outlines the manner in which we establish a dependency structure between consumption flows, focusing on the application of the Clayton copula.
In Section \ref{algorithm}, we introduce an algorithm designed to determine the optimal consumption random vector. This algorithm incorporates a dependency structure through the utilization of copulas. It consists of three essential steps: simulating the state-price process, generating values for the consumption distribution, and ultimately computing the solution. We would like to express our gratitude to Carole Bernard of Grenoble Ecole de Management for her ideas to develop this algorithm.
Section \ref{examples} provides concrete examples that demonstrate the application of the strategy under both the Black-Scholes model and the Constant Elasticity of Variance (CEV) model. We employ Clayton copulas in both cases. In the context of the CEV model, computing the state price process is not a straightforward task. As a solution, we derive its moment generation function using the Laplace transform and subsequently estimate it through simulations of the stock price. This approach capitalizes on the equivalence of the CEV model to a square root diffusion.

\section{Existence of the Cost-efficient Consumption}
\label{existence}

In pursuit of the objective to determine a cost-efficient consumption aligned with the agent's preferences, we consider the sum of random variables that represent the consumption at each time instance. As previously mentioned, the distribution of this sum has to be anticomonotonic with the distribution of the state price process. In this section, we provide a proof to establish that the set of random variables which meets the agent preferences is not empty.
Particularly, the following theorem proves that given the consumption distributions, it is possible to find a random vector that satisfies these distributions.

\begin{theorem}\label{thm:exist}
	Let $(X_1, \, X_2, \ldots, \, X_N)$ be a random vector on the atomless standard probability space $\ofp$ with joint cumulative distribution function $F$. The marginal distributions of the vector and $F$ represent the agent preferences. We define $Z = \sum_{k=1}^N X_k$. Let $\tilde{Z}$ be a random variable defined on $\ofp$ that has the same distribution as $Z$. Then, there exists a random vector $(\tilde{X}_1, \, \tilde{X}_2, \ldots, \, \tilde{X}_N)$, which represents the agent consumption at time $1,2, \dots, N$ respectively, with joint cumulative distribution function $F$, such that $\sum_{k=1}^N \tilde{X}_k = \tilde{Z}$ and $\tilde{X}_k$ is distributed as $X_k$ for $k=1,...,N$. It is in general not unique.
\end{theorem}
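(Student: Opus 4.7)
The plan is to construct the desired vector by a disintegration-plus-resampling argument. Denote by $\mu$ the law on $\R^N$ associated with the joint CDF $F$ (so that $(X_1,\ldots,X_N)$ has law $\mu$) and let $\pi$ be the law of $Z=\sum_{k=1}^N X_k$, which is also the law of $\tilde{Z}$ by hypothesis. Since $\R^N$ is a Polish space and the sum map $s(x)=x_1+\cdots+x_N$ is Borel, standard results on regular conditional probabilities give a Borel kernel $\{\mu_z\}_{z\in\R}$ such that each $\mu_z$ is concentrated on the hyperplane $H_z=\{x\in\R^N:s(x)=z\}$ and $\mu(B)=\int \mu_z(B)\,d\pi(z)$ for every Borel $B\subset\R^N$. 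This $\mu_z$ encodes the conditional law of $(X_1,\ldots,X_N)$ given $Z=z$.

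Next, I would exploit that $\ofp$ is atomless and standard to produce, on the same space, a random variable $U$ uniform on $(0,1)$ and independent of $\tilde{Z}$ (this uses the measure-algebra isomorphism of any atomless standard space with $([0,1],\mathcal{B},\lambda)$ to carve off an auxiliary uniform). I would then build a jointly Borel measurable map $T:\R\times(0,1)\to\R^N$ such that $T(z,U)\sim\mu_z$ and $T(z,u)\in H_z$ for every $z$ and $u$. The concrete way to produce $T$ is to project $\mu_z$ onto its first $N-1$ coordinates (since the $N$th is determined by $x_N=z-\sum_{k<N} x_k$) and then apply the multivariate Rosenblatt quantile transform successively in the coordinates, converting a single uniform $U$ into the $N-1$ independent uniforms needed via a fixed Borel isomorphism $(0,1)\cong(0,1)^{N-1}$.

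Having $T$ in hand, I would define the desired vector by
\begin{equation}
(\tilde{X}_1(\omega),\ldots,\tilde{X}_N(\omega)):=T\bigl(\tilde{Z}(\omega),\,U(\omega)\bigr).
\end{equation}
By construction $\sum_{k=1}^N \tilde{X}_k=\tilde{Z}$ pointwise. Since $U$ is independent of $\tilde{Z}$ and $\tilde{Z}$ has law $\pi$, the push-forward computation yields that the law of $(\tilde{X}_1,\ldots,\tilde{X}_N)$ is $\int \mu_z\,d\pi(z)=\mu$, so its joint CDF is $F$ and each marginal $\tilde{X}_k$ has the same law as $X_k$. For the final claim, non-uniqueness is immediate: composing $T(z,\cdot)$ with any $\mu_z$-preserving measurable bijection of $(0,1)$ produces another valid selector, and whenever $F$ exhibits a symmetry (e.g.\ exchangeability of two coordinates) permuting the corresponding components gives a genuinely different vector with the same joint law and same sum $\tilde{Z}$.

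The main obstacle, and essentially the only nontrivial technical point, is the measurable parametrization step that yields $T$: one must ensure simultaneously that $T(z,U)\sim\mu_z$ for $\pi$-a.e.\ $z$, that $T(z,u)\in H_z$ everywhere (so the sum constraint holds pointwise, not just almost surely), and that $T$ is jointly Borel in $(z,u)$. All of this is standard consequence of the Polish structure of $\R^N$ and the atomless standardness of $\ofp$, but it is the place where the hypotheses of the theorem are actually used.
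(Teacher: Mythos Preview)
Your proposal is correct and follows essentially the same route as the paper: disintegrate the joint law along the sum $Z$, manufacture an independent uniform on the atomless standard space, and rebuild the vector from $\tilde Z$ via conditional quantile (Rosenblatt) transforms so that the sum constraint holds pointwise. The only cosmetic difference is that the paper peels off one coordinate at a time using $N-1$ separate independent uniforms $V_1,\ldots,V_{N-1}$, whereas you do it in a single pass after splitting one uniform through a Borel isomorphism $(0,1)\cong(0,1)^{N-1}$.
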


\begin{proof}
	For the case $N=2$, consider the joint distribution function of the given random variables including the sum, i.e.,
	\[
	H(z,x_1, x_2) = \P\bigl[X_1 + X_2 \leq z, X_1 \leq x_1, X_2, \leq x_2\bigr].
	\]
	Then the goal is to construct a vector $(\tilde{X}_0, \tilde{X}_1, \tilde{X}_2)$ with distribution $H$ which satisfies $\tilde{X}_0 = \tilde{Z}$. This can be done by combining a one-dimensional distribution transform with a multidimensional quantile transform (see \cite{R_DT}). Specifically, define $U_{\tilde{Z}}$ as the distributional transform of $\tilde{Z}$, i.e., by setting
\begin{align*}
h(z, \lambda) &:= \P\bigl[\tilde{Z} < z\bigr] + \lambda \P\bigl[\tilde{Z} = z\bigr]\\
U_{\tilde{Z}}& := h\bigl(\tilde{Z}, U\bigr)
\end{align*}
for some $U \sim \mathcal{U}\bigl((0,1)\bigr)$ independent of $\tilde{Z}$. Then $U_{\tilde{Z}}$ is a standard uniform random variable, and for $G^{-1} $ the generalized inverse (quantile function) of $G$ we have $G^{-1}\bigl(U_{\tilde{Z}}\bigr) = \tilde{Z}$. Note that the existence of such an independent random variable follows from the atomlessness of the probability space (cf. \cite[Proposition A.27]{FS}).

Now consider the conditional distribution of $X_1$ given the sum $X_1 + X_2 = Z$. We have
\[
F_{X_1 \vert Z}(x_1 \vert z) = \P\bigl[X_1 \leq x \, \, \vert \, X_1 + X_2 = Z\bigr].
\]
%
We note that the conditional regular probabilities exist as our probability space is standard.
Therefore
we can define the vector $(\tilde{X}_0, \tilde{X}_1, \tilde{X}_2)$ by generating a standard uniform $V_1$ independent of $U$ and $\tilde{Z}$ (and thus independent of $U_{\tilde{Z}}$) and setting
\begin{align*}
\tilde{X}_0 &:= G^{-1}\bigl(U_{\tilde{Z}}\bigr) = \tilde{Z},\\
\tilde{X}_1 &:= F_{X_1 \vert Z}^{-1}(V \vert \tilde{Z}),\\
\tilde{X}_2 &:= \tilde{Z} - \tilde{X}_1,
\end{align*}
since the third quantile function is trivial as $F_{X_2 \vert Z, X_1}(x_2 \vert Z, X_1) = \ind_{\{Z-X_1 \leq x_2\}}$.
Thus we have to provide a construction of a random vector $(\tilde{X}_0, \tilde{X}_1, \tilde{X}_2)$ that has indeed the distribution $H$ by construction as well as $\tilde{X}_0 = \tilde{Z}$.

We finish by noting that the construction is unique if and only if is independent of the choice of the uniforms $U$ and $V$. This is the case only if the random variable $Z$ has no atom (i.e., the c.d.f. $G$ is continuous) and $X_1$ and $X_2$ are monotone functions of the same random variable, i.e., either comonotonic or anticomonotonic.

For the case $N>2$, we retake the previous procedure renaming $X_1$ as $X_N$ and $X_2$ as $X_1+...+X_{N-1}$. Then we have 
\begin{align*}
\tilde{X}_0 &:= G^{-1}\bigl(U_{\tilde{Z}}\bigr) = \tilde{Z}\\
\tilde{X}_1 &:= F_{X_N \vert Z}^{-1}(V_1 \vert \tilde{Z})\\
\tilde{X}_2^{'} &:= \tilde{Z} - \tilde{X}_1
\end{align*} 
where $\tilde{X}_2^{'}$ has the same distribution as $X_1+...+X_{N-1}$.

We apply the case of $\ N=2\ \  (X_1, \ X_2, \ X_1+X_2 = Z\ )$ for the triple $(X_{N-1}, \ X_1+...+X_{N-2}, \ X_1+...+X_{N-2}+X_{N-1} = Z-X_N)$ and results
\begin{align*}
\tilde{X}_2^{'} &:= \tilde{Z} - \tilde{X}_1\\
\tilde{X}_2 &:= F_{X_{N-1} \vert Z-X_n}^{-1}(V_2 \vert \tilde{Z} - \tilde{X}_1)\\
\tilde{X}_3^{'} &:= \tilde{Z} - \tilde{X}_1 - \tilde{X}_2
\end{align*} 

We apply the procedure $N-1$ times to have
\begin{align*}
\tilde{X}_0 &:= \tilde{Z}\\
\tilde{X}_1 &:= F_{X_{N} \vert Z}^{-1}(V_1 \vert \tilde{Z})\\
\tilde{X}_2 &:= F_{X_{N-1} \vert Z-X_n}^{-1}(V_2 \vert \tilde{Z} - \tilde{X}_1)\\
\vdots\\
\tilde{X}_{N-1} &:= F_{X_2 \vert Z-X_N-...-X_3}^{-1}(V_{N-1} \vert \tilde{Z} - \tilde{X}_1-...-\tilde{X}_{N-2})\\
\tilde{X}_N &:= \tilde{Z} - \tilde{X}_1 -...- \tilde{X}_{N-1}
\end{align*} 
where $V_k$ is a standard uniform random variable independent of $\tilde{Z} - \tilde{X}_1 -...- \tilde{X}_{i-1}$, $1 \leq k \leq N$.
\end{proof}

Once we have proven the existence of such a random vector, we are ready to present how we model the dependence structure and the algorithm to find an optimal random vector that leads us to a cost-efficient strategy.

\section{Dependency structure}
\label{dependency}

We propose a sophisticated and simple way to add a dependency structure of consumption flow via copulas, which are multivariate cumulative distribution functions with uniform marginal distributions on the interval [0,1] and are practical to describe the inter-correlation between random variables. 
In this work, we choose the Clayton copula due to its easiness to model it with one parameter. Clayton copulas are Archimedean copulas useful to model dependent structure as they have only one parameter $\alpha \in [-1,\infty) \setminus \{ 0 \} $ in its expression
$$
C^{\alpha}_{X_1,...,X_N}(x_1,...,x_N) = \left( \left( 1-N + \sum^N_{k=1} F(x_k)^{-\alpha} \right)_+\right)^{-\frac{1}{\alpha}},
$$
which we denote just by $C^{\alpha}$ for short when there is no confusion. 

The interpretation of $\alpha$ is easy and related to the correlation between the random variables: if $\alpha$ is closer to $-1$, we have an anticorrelated structure; if it is close to zero, we have an independent structure; and while it grows to $\infty$, we have a highly correlated structure, as we show in figure~\ref{FigClayton} with for random variables.

\begin{figure}[h]	
\begin{center}
\includegraphics[
trim = 0cm 0cm 0cm 0cm, clip, 
width=14cm]{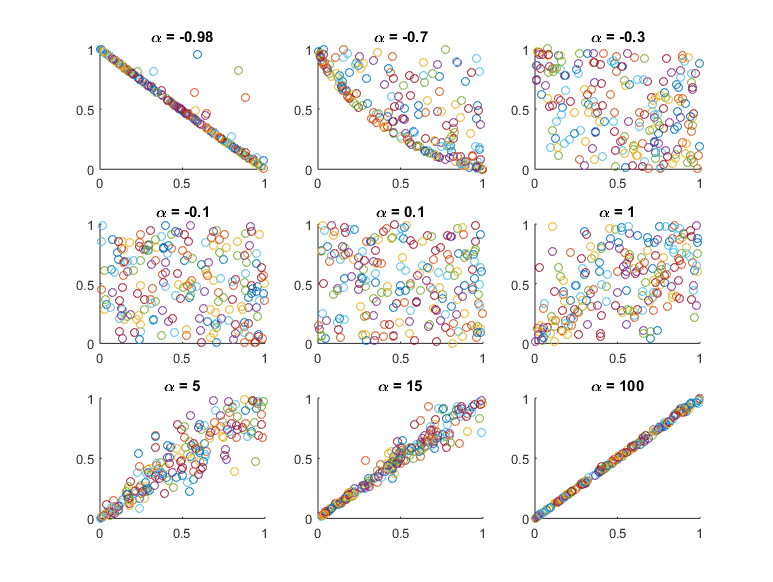}
\caption{Bivariate Clayton Copulas for different values of $\alpha$.}
\label{FigClayton}
\end{center}
\end{figure}

The generator function $\varphi: (0,1] \rightarrow [0,\infty)$ of a Clayton copula with parameter $\alpha$ is  
\begin{equation}
\varphi (u) = \left( u^{-\alpha} - 1 \right) /\alpha.
\label{Clayton1}
\end{equation}
Its inverse function is
\begin{equation}
\varphi^{-1} (u) = \left( 1 + \alpha u \right) ^{-1/\alpha},
\label{Clayton2}
\end{equation}
and its k-th derivative is
\begin{equation}
\varphi^{-1(k)}(u) = (-1)^k (1 + \alpha u)^{-(1 + k\alpha)/\alpha} \prod^{k-1}_{j=0}(1 + j\alpha).
\label{Clayton3}
\end{equation}


We use expressions \eqref{Clayton1}, \eqref{Clayton2} and \eqref{Clayton3} to generate a $N-$tuple vector $(X_1,...,X_N)$ from a Clayton copula with parameter $\alpha$ through the following procedure described in \cite{wu2007simulating}.

\begin{enumerate}
\item[1.] We generate $N$ independent $U(0,1)$ random values, and denote them by $w_1,...,w_N$.
\item[2.] We set $s_k = w_k^{1/k}$ for $k=1,2,...,N-1$.
\item[3.] We compute $v = F_{C^\alpha}^{-1}(w_N)$, where 
$$
F_{C^\alpha}(v) = \dfrac{1}{(N-1)!} \times \int_0^v \varphi^{-1(N)} (w) [\varphi(w)]^{N-1} \varphi' (w) dw,
$$
and can be expressed as (see \cite{genest2001multivariate})
$$
F_{C^\alpha}(v) = 
v + \sum^{n-1}_{k=1} \dfrac{1}{k!} \times (-1)^k \varphi^{-1(k)} (\varphi(t)) [\varphi(t)]^k.
$$
\item[4.] We set $u_1 = \varphi^{-1} (s_1 \cdots s_{N-1} \varphi (v))$,
$u_k = \varphi^{-1}\left( (1-s_{k-1}) \prod ^{N-1}_{j=k} s_j \cdot \varphi(v) \right)$
for $k=2,...,N-1$, and
$\ u_N = \varphi^{-1}((1-s_{N-1})\varphi(v))$.
\item[5.] The desired values are $x_k = F_k^{-1}(u_k)$, where we fixed $F_k=F$, $k=1,2,...,N$.
\end{enumerate}


\section{Optimal consumption algorithm}
\label{algorithm}

We present an algorithm to find the optimal consumption process. To do this, we consider a grid of $n$ states of the market at each time-tick $k$, denoting the state space as $\{\omega_i\}^n_{i=1}$. 

We assume that the financial market is free of arbitrage and that there exists a price kernel $\xi=\xi_k$ at each time $k$, such that $X_k$ is priced as $\mathbb{E}[\xi X_k]= \mathbb{E}_\Q [X_k]$, where $\Q$ is the risk neutral measure. Without loss of generality, we can consider $\xi(\omega_i)\leqslant \xi(\omega_{i+1})$, $1 \leqslant i \leqslant n-1 $.

The process of optimal consumption until horizon time-tick $N$ satisfies the following equation:
\begin{equation}
\stackbin[(X_1,X_2, ...,X_N)\sim C]{}{\textup{min}}  \mathbb{E} \Bigg[ \xi_N \Bigg( \sum_k X_k \Bigg) \Bigg] , 1 \leq k \leq N,
\end{equation}
where $X_k \sim F_k$ for fixed desired distributions $F_k$. We represent with $C$ the copula that describes the relation between the random variables $X_k$.\\

We describe the algorithm  in three steps:

\begin{enumerate}

\item Simulate the state-price process:

We generate $n$ simulated values of the state-price process $\xi(i) =\xi(\omega_i)$, $1 \leqslant i \leqslant n $ , at time $N$. 

\item Simulate the values of the consumption distribution:

We generate $n$ simulated values of $N$ correlated vectors $C_1, ..., C_N$ of the chosen copula.

From these vectors, we compute $X_1 = F_1^{-1}(C_1), ..., X_N = F_N^{-1}(C_N)$, which are $N$ correlated vectors with distributions $F_1,...,F_N$ respectively, and set  $Z = \sum_k X_k$. 

Then we have the following arrangement 

\begin{equation}
\begin{array}{cccc|c|c}
x_{11} & x_{21} & \ldots &  x_{N1} & \sum_k x_{k1} = z_1 & \xi(1) \\
\vdots & \vdots & \ddots &  \vdots & \vdots & \vdots \\
x_{1n} & x_{2n} & \ldots &  x_{Nn} & \sum_k x_{kn} = z_n & \xi(n)
\end{array}
\end{equation}

\item Solution:

We construct $Z^*$ from the elements of $Z$ ordered to be antimonotonic with the state price vector. It means that we arrange the observations $z_j$ increasingly to be antimonotonic with the last column.

\begin{equation}
\begin{array}{c|c}
z_1^*  & \xi(1)  \\
\vdots & \vdots \\
z_n^*  & \xi(n)  \\
\end{array}
\end{equation}

Let $p(j)$ be the permutation function $p:\{ 1,..,n\}\rightarrow\{1,...,n\}$ such that for every $j$, $z_{p(j)} = z^*_{i}$ for some $i$ in $\{1,...,n\}$. Therefore, the vector $X_k^*$, $1 \leq k \leq N$, that improves the original one is 
\begin{equation}
X_k^* = \left[
\begin{array}{c}
x_{t \sigma(1)} \\
\vdots \\
x_{t \sigma(n)}
\end{array}
\right].
\end{equation}

The variables $X_1^*,...,X_N^*$ have the same joint distribution and a lower or equal cost with respect to the original ones, and we achieve 
\begin{equation}
\stackbin[(X_1,X_2, ...,X_N)\sim C]{}{\textup{min}} \mathbb{E} \Bigg[ \xi_N \Bigg( \sum_k X_k \Bigg) \Bigg] \approx  \mathbb{E} \Bigg[ \xi_N \Bigg( \sum_k X_k^* \Bigg)  \Bigg] 
= \mathbb{E} \left[ \xi_N Z^* \right].
\end{equation}
\end{enumerate}


\section{Examples of implementation}
\label{examples}

We show an implementation of the intertemporal optimal consumption. Our first example is under the Black-Scholes model, where we can explicitly compute the state price process. The second one is under the Constant Elasticity of Variance (CEV) model, where we obtain the distribution of the price kernel via the inverse Laplace transform and reproduce its values through a simulation. In both examples, the number of periods is $N=10$.


For both models, we assume the investor chooses a consumption lognormal distribution $F$ with parameters $\mu_{log}$ and $\sigma_{log}$ (which are kept the same for every period for ease of computation). We model the dependency between them using a Clayton copula.


\subsection{Black-Scholes model}
$•$

Under the Black-Scholes model, the dynamics of $S_t$ are
\begin{equation}
dS_t = \mu S_t dt + \sigma S_t dW_t, \ S_0>0,
\label{S_t_dynamics}
\end{equation}
%
%
where $W_t,\ 0\leqslant t \leqslant T$, is a Brownian motion on a probability space $(\Omega,\mathbb{F},\mathbb{P})$, where $\mathbb{F}=\{\mathcal{F}_t\}_{0\leq t\leq 1},$ is the natural filtration for this Brownian motion.

Therefore, the dynamics of the discounted price process $\tilde{S}_t = e^{-rt} S_t$ are
\begin{equation}
d\tilde{S}_t = (\mu - r) \tilde{S}_t dt + \sigma \tilde{S}_t d W_t, \ \tilde{S}_0>0.
\label{S_t_disc_dynamics}
\end{equation}


To find the state price process or price kernel under the Black-Scholes model, we require to find the risk-neutral measure $\mathbb{Q}$ in virtue of Girsanov theorem, such that the discounted price process $\tilde{S}_t = e^{-rt} S_t$ is a $\mathbb{Q}$-martingale.

We let $\varphi _t,\ 0\leqslant t \leqslant T$, be an $\mathbb{F}$-adapted process such that $\tilde{S}_t$ is a martingale with dynamics 
$
d\tilde{S}_t = \tilde{\sigma} d W^{\Q}_t,
$
where 
$$ 
W^{\Q}_t = W_t + \int_0^t \varphi_s ds.
$$

To find $\varphi_t$ explicitly, we rewrite the dynamics of $\tilde{S}_t$ as 
$$
\begin{array}{ll}
d\tilde{S}_t &= (\mu - r) \tilde{S}_t dt + \sigma_t \tilde{S}_t \big(dW_t^{\Q} -\varphi_t dt\big)
\\\\
&= \left[ (\mu - r) - \sigma\varphi_t \right]\tilde{S}_t dt + \sigma_t \tilde{S}_t d W^{\Q}_s,
\end{array}
$$
set $(\mu - r) - \sigma\varphi_t = 0$,  and find that  $\varphi_t = \theta$, where $\theta =  \frac{\mu - r}{\sigma}$, the Sharpe ratio of the stock. 

Note that  $\varphi_t = \varphi$ satisfies the Novikov's condition since it is constant, and therefore, the stochastic exponential 
$$
Z_t = \exp \left\lbrace - \displaystyle\int_0^t \varphi _u dW_u - \dfrac{1}{2} \int_0^t \varphi^2 _u du \right\rbrace
$$
is a martingale under the probability measure $\mathbb{Q}$, given by 
$$
\mathbb{Q}(A) = \int_A Z(\omega) dP(\omega), \textup{ for all } A \in \mathcal{F},
$$ 
and the process $W^\mathbb{Q}_T, \ 0\leqslant t \leqslant T$, is a $\mathbb{Q}$-Brownian motion.

In consequence, the measures $\mathbb{P}$ and $\mathbb{Q}$ are related by the Radon-Nikodym derivative
\begin{equation}
\dfrac{d\mathbb{Q}}{d\mathbb{P}} = \exp \left( -\int_0^t \dfrac{\mu - r}{\sigma} dWs - \dfrac{1}{2} \int_0^t \left( \dfrac{\mu - r}{\sigma} \right)^2 dt \right),
\label{Radon-Nikodym}
\end{equation}
and we have that the state price process $\xi_t$ under the Black-Scholes model is
\begin{equation}
\xi_t := e^{-rt} \frac{d\mathbb{Q}}{d\mathbb{P}} = e^{-rt} e^{-\frac{1}{2} \left( \frac{\mu-r}{\sigma} \right)^2t} e^{-(\frac{\mu-r}{\sigma})W_t}.
\label{xi_t}
\end{equation}

We rewrite the state price process at time $T$ in terms of $S_T$
\begin{equation}
\xi_T = a \left( \dfrac{S_T}{S_0} \right)^{-\frac{\theta}{\sigma}},
\label{xi_T}
\end{equation}
where 
$$a = \exp \left( \dfrac{\theta}{\sigma} \left( \mu - \dfrac{\sigma^2}{2} \right) T - \left( r + \dfrac{\theta^2}{2} \right) T \right).$$ 


Under the Black-Scholes model, the unique state-price process at time $T$, is given in \eqref{xi_T}. Assuming that the stock expected return is bigger than the risk-free rate, \textit{i.e.}, $\mu>r$, we see from the previous expression that $\xi_T$ is anticomonotonic with ${S_T}$ since 
$\xi_T(\omega_i)$ is equal to
 $ \frac{a S_0}{{S_T}(\omega_i)^{\ \theta / \sigma}} 
$, where $a$ is a positive constant, and we use $\{\omega_i\}^n_{i=1}$ as a grid of the state space.

Having obtained the expression for the price kernel, we proceed with the first step of the algorithm outlined in Section \ref{algorithm}—that is, simulating the state-price process. Subsequently, we move on to step two, where we simulate the values of the consumption distribution while incorporating a dependency structure defined by a copula. Finally, in step three, we compute the cost of the strategy. 

We carry out these steps within a Black-Scholes model and 10 periods of consumption, employing the following market parameters: an expected rate of return of the stock ($\mu = 0.03$), a stock volatility ($\sigma = 0.3$), a risk-free rate ($r = 0.02$), and we consider a lognormal distribution with parameters $\mu_{log} = 100$ and a standard deviation of $40$ as our target distribution. Additionally, we employ a Clayton copula to establish the dependency structure, utilizing different values of $\alpha$.

We show in Figure~\ref{Fig3D} the resulting relationship between the strategy cost, the parameter alpha of the Clayton copula, and the standard deviation of the lognormal distribution, which is different from the parameter $\sigma_{log}$. We fixed the parameter $\mu_{log}=100$.

\begin{figure}[h!]
\begin{center}
\includegraphics[
trim = 0cm 0cm 0cm 0cm, clip, 
width=13cm]{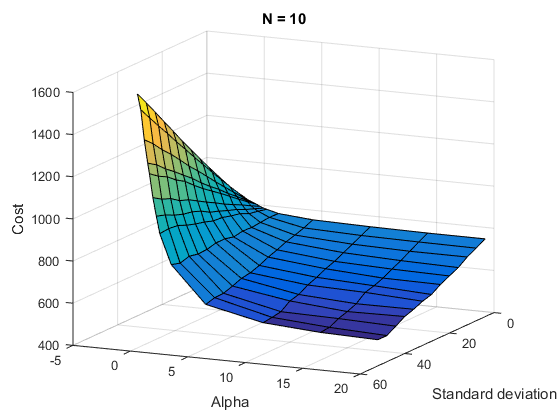}
\caption{Relation between the cost, the parameter alpha, and the standard deviation of the lognormal distribution (which is different from the parameter $\sigma_{log}$) for $N=10$ under Black-Scholes model and $\mu_{log}=100$.}\label{Fig3D}
\end{center}
\end{figure}

We notice that for negative values of alpha, the standard deviation of the chosen lognormal distribution is inversely proportional to the cost, and for positive values, the relationship is proportional. Then, the best choice to get a cost-efficient consumption is a positive alpha (\textit{e.g.}, $\alpha=5$ or $\alpha=20$).

Fixing the standard deviation to $40$, we see that the value of alpha that achieves the minimum cost is $20$, as we show in Figure~\ref{Fig2D}.

\begin{figure}[h!]
\begin{center}
\includegraphics[
trim = 0cm 0cm 0cm 0cm, clip, 
width=11cm]{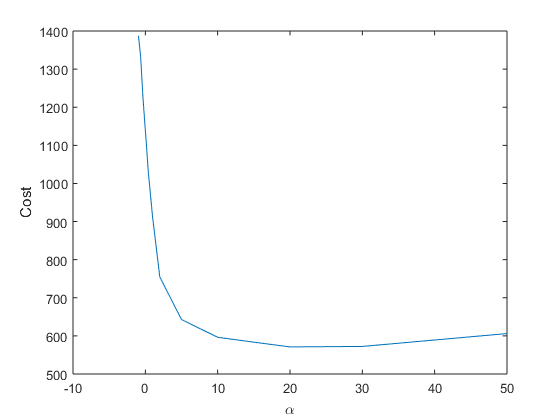}
\caption{Relation between the cost and alpha for a standard deviation of $40$, and $N=10$ under Black-Scholes model.}\label{Fig2D}
\end{center}
\end{figure}

Figure~\ref{FigCEF} displays what we refer to as the \textit{Cost-Efficient Frontier} for $\alpha=5$ and $\alpha=10$. The horizontal axis represents the standard deviation, acting as the input, while the vertical axis showcases the expected value of the lognormal distribution, which is the expected consumption per period. We have set $N=10$ periods and a fixed budget of $1000$.

\begin{figure}[h]
\begin{center}
\includegraphics[
trim = 0cm 0cm 0cm 0cm, clip, 
width=12cm]{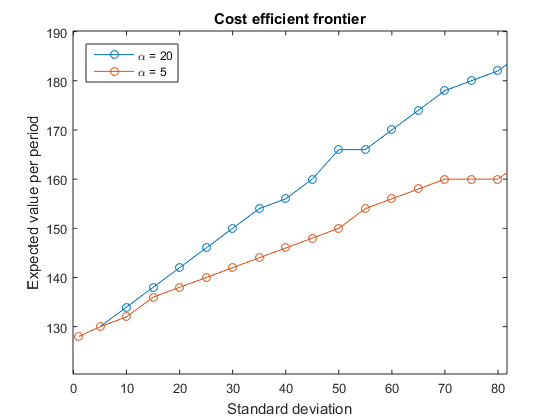}
\caption{Cost efficient frontier for two values of alpha and $N=10$ under the Black-Scholes model. The X-axis represent the standard deviation of the target distribution and the Y-axis the expected value per period of the target distribution.}\label{FigCEF}
\end{center}
\end{figure}

We see that we have bigger expected values for $\alpha=20$ than $\alpha=5$. The curve of the second case grows less fast. For bigger values of standard deviation, both curves show the classical concave shape of risk aversion behavior.

The cost-efficient frontier tells us that given the market conditions described above, if we assume a risk of $\$40$ for example, we can expect to win an average of $\$155$ over $10$ periods with a budget of $\$1000$ using a consumption correlation structure given by $\alpha=20$.\\


\noindent\textbf{Hedging strategy}

Within this section, we construct the hedging strategy to replicate the cost-efficient consumption process in a Black-Scholes market.


On the basis of the fact that $Z^*_T$ is anticomonotonic with $\xi_T$ and that it is the cheapest way to achieve the distribution $F$, we have that
\begin{equation}
Z^*_T = F^{-1} (1 - F_{\xi_T}(\xi_T)).
\end{equation}

Under the Black-Scholes model, the state price process is anticomonotonic with $S_T$ when $\theta>0$, or equivalently, $\mu>r$. Assuming this, we have
\begin{equation}
Z^*_T = F^{-1} (F_{S_T}(S_T)).
\end{equation} 

To find the hedging strategy for $Z^*_T$ consisting on holding $\delta_t$ units of the stock and $\psi_t$ units of the bond, we let $E_t$ denote the value of $Z^*_T$ at time $t$ under the measure $\Q$, i.e.,
$$E_t = B_t \mathbb{E}_\mathbb{Q} [B_T^{-1} Z^*_T \vert \mathcal{F}_t]
= e^{-r(T-t)} \mathbb{E}_\mathbb{Q} [Z^*_T \vert \mathcal{F}_t],$$
where $B_t=B_0 e^{rt}$ is the standard bond process, and we assume that $E_t$ is Malliavin differentiable and admits the Clark-Ocone representation formula:
\begin{equation}
E_t = \mathbb{E}_\Q [E_t] + \int_0^t \phi_s d W^{\Q}_s,
\label{ClarkOcone}
\end{equation}
where 
$\phi_t = \mathbb{E}_\Q [D_t E_t \vert \F_t]$.
We use the following expression of $S_t$ under $\Q$ 
$$ d {S_t} = r S_t + \sigma {S_t} dW^{\Q}_t$$ 
and substitute it in \eqref{ClarkOcone} to find that
\begin{equation}
E_t = E_0 + \int_0^t \frac{\phi_s}{\sigma S_s} dS_s
\ - \int_0^t \frac{r \phi_s}{\sigma} \ ds.
\label{ClarkOconeStock}
\end{equation}

Equation \eqref{ClarkOconeStock} is the integral form of the SDE
\begin{equation}
dE_t = \delta_t dS_t - \frac{\phi_t}{\sigma B_t} dB_t,
\end{equation}
where $\delta_t = \dfrac{\phi_t}{\sigma S_t}$ represents the units of the stock we need to hold. The units of the bond we need are given by
$$
\psi_t = e^{-r(T-t)} \mathbb{E}_\mathbb{Q} [Z^*_T \vert \mathcal{F}_t]
-\delta_t S_t,
$$
and the value of the position at time $t$, is $E_t = \delta_t S_t + \psi_t B_t = 0$.\\


\noindent\textit{Lognormal example}\\

To illustrate the application of these results with closed-form expressions, we show an example when $F$ is a lognormal distribution with parameters $\mu_\textup{log}$ and $\sigma_\textup{log}$. Then $F(x) = \Phi \left( \dfrac{\ln(x)-\mu_\textup{log}}{\sigma_\textup{log}} \right)$, where $\Phi$ is the c.d.f. of the standard normal distribution.

In this case, we find that
$$
\begin{array}{ll}
Z^*_T &= \exp \left\lbrace \mu_{\textup{log}} + \sigma_{\textup{log}} \Phi^{-1} \left( \Phi \left( \dfrac{\ln S_T - \mu}{\sigma} \right) \right)  \right\rbrace
\\\\
&= \exp \left\lbrace \ln \left( S_T^{\frac{\sigma_{\textup{log}}}{\sigma}} \right) + \mu_{\textup{log}} - \mu \dfrac{\sigma_{\textup{log}}}{\sigma} \right\rbrace
\\\\
&= b S_T^{\frac{\sigma_{\textup{log}}}{\sigma}},
\end{array}
$$
where $b = \exp \left\lbrace  \mu_\textup{log} - \mu \dfrac{\sigma_\textup{log}}{\sigma} \right\rbrace $. Hence, we obtain,

$$
\begin{array}{ll}
e^{r(T-t)} E_t &= \mathbb{E}_\Q \left[ b \left( S_0 \exp \left\lbrace -\dfrac{\sigma^2}{2} T + \sigma W_T^{\Q} \right\rbrace \right) ^{\frac{\sigma_{\textup{log}}}{\sigma}} \Bigg\vert \mathcal{F}_t \right]
\\\\
&= b S_0^{\frac{\sigma_{\textup{log}}}{\sigma}} \exp \left\lbrace -\dfrac{\sigma_{\textup{log}}\sigma}{2} \right\rbrace \mathbb{E}_\Q \left[ \exp \left\lbrace \sigma_{\textup{log}} W_T^\Q \right\rbrace \Big\vert \mathcal{F}_t \right]
\\\\
&= b S_0^{\frac{\sigma_{\textup{log}}}{\sigma}} \exp \left\lbrace -\dfrac{\sigma_{\textup{log}}\sigma}{2} + \sigma_{\textup{log}}^2 \dfrac{T-t}{2} + \sigma_{\textup{log}} W_t^\Q
\right\rbrace
\\\\
&= b \left( S_0 \exp \left\lbrace -\dfrac{\sigma^2}{2} + \dfrac{\sigma\sigma_{\textup{log}}(T-t)}{2} + \sigma W_t^\Q \right\rbrace \right)^{\frac{\sigma_{\textup{log}}}{\sigma}}
\\\\
&= b S_t^{\frac{\sigma_{\textup{log}}}{\sigma}} \exp \left\lbrace \sigma_{\textup{log}}^2 \dfrac{T-t}{2} \right\rbrace
\\\\
&= b S_t^{\sigma'} g_t,
\end{array}
$$
where $\sigma' = \frac{\sigma_{\textup{log}}}{\sigma}$ and $g_t = \exp \left\lbrace \sigma_{\textup{log}}^2 \dfrac{T-t}{2} \right\rbrace$.

Since $D_t S_t = \sigma S_t$, then
$D_t E_t = e^{-r(T-t)} b \sigma_{\textup{log}} S_t^{\sigma'} g_t$,
and finally, we have that the hedging positions are
\begin{equation}
\delta_t = e^{-r(T-t)} b \sigma' S_t^{\sigma'-1} g_t , \ \ 
\psi_t = e^{-r(T-t)} b (1-\sigma') S_t^{\sigma'} g_t.
\end{equation}

 
\subsection{CEV model}
$•$

We consider the case $S_t$ follows a Constant Elasticity of Variance (CEV) process, with dynamics
$$
dS_t = \mu S_t dt + \sigma S_t^{\beta + 1} dW_t ,
$$ 
where $\beta\in\mathbb{R}$. Then the discounted price process has the following dynamics
$$
d\tilde{S}_t = (\mu - r) \tilde{S}_t dt + \sigma \tilde{S}_t^{\beta + 1} dW_t \ .
$$

Observe that if $\beta=0$ and $\beta = -1$, the CEV model reduces to the Black-Sholes model and the Bachelier model, respectively.

According to \cite{mijatovic2012martingale}, we can use the density process 
\begin{equation}
Y_t = \exp \left\lbrace - \dfrac{\mu-r}{\sigma} \displaystyle\int_0^t \tilde{S}_s^{-\beta} dW_s 
- \dfrac{1}{2} \left( \dfrac{\mu-r}{\sigma} \right)^2 \int_0^t \tilde{S}_s^{-2\beta} ds \right\rbrace , \ t\in[0,T].
\label{density_CEV}
\end{equation}
to change the measure to the risk neutral measure $\mathbb{Q}$. Then $\tilde{S}_t$ is a martingale under $\mathbb{Q}$. In particular, $\tilde{S}_t$ is a CEV process that satisfies the driftless equation $d\tilde{S}_t = \sigma\tilde{S}^{\beta +1} dW^{\mathbb{Q}}$. Moreover, if $\beta < 0$, $Y$ is a uniformly integrable martingale, and if $\beta \geqslant 0$, $Y$ is a martingale that is  not uniformly integrable (see \cite{mijatovic2012martingale}).

To find the distribution of $Y_t$, we take the logarithmic version of $Y_t$ and write 
$$
\log Y_t = - \int_0^t \varphi_s dW_s - \dfrac{1}{2} \int_0^t \varphi^2 ds ,
$$
where $\varphi_s = \dfrac{\mu - r}{\sigma} \tilde{S}_s^{-\beta}$, and consider the moment generating function of $\hat{Y}_t = \log Y_t$
$$
\begin{array}{lll}
\mathbb{E}[e^{\lambda \hat{Y}_t}] &= \mathbb{E}\exp \left\lbrace \lambda \left( - \displaystyle\int_0^t \varphi_s dW_s - \dfrac{1}{2} \int_0^t \varphi_s^2 ds \right) \right\rbrace
\\\\ 
&= \mathbb{E} \exp \left\lbrace - \displaystyle\int_0^t \lambda \varphi_s dW_s - \dfrac{1}{2} \int_0^t (\lambda \varphi_s)^2 ds \right\rbrace \times
\\\\ 
& 
\ \ \ \exp \left\lbrace \dfrac{\lambda^2 - \lambda}{2} \displaystyle\int_0^t \left( \dfrac{\mu - r}{\sigma} \right)^2 \tilde{S}_s^{-2\beta} ds \right\rbrace.
\end{array}
$$ 

We define the measure $\mathbb{Q}_\lambda$ such that
$$
W_t^{\mathbb{Q}_\lambda} = W_t + \int_0^t \lambda \varphi_s ds.
$$
Observe that $\mathbb{Q} = \mathbb{Q}_1$. Under this measure, the stochastic exponential 
$$
\exp \left\lbrace - \displaystyle\int_0^t \lambda \varphi_s dW_s - \dfrac{1}{2} \int_0^t (\lambda \varphi_s)^2 ds \right\rbrace
$$
is a martingale for $\beta \in [0,1]$ in virtue of the theorem 4.1 in \cite{klebaner2014stochastic}. Here, the authors develop an extension of Bene\v s method.

Under $\mathbb{Q}_\lambda$, the discounted price process has the dynamics
$$
d \tilde{S}_t = (1 - \lambda) (\mu - r) \tilde{S}_t dt + \sigma \tilde{S}_t^{\beta+1} dW_t^{\mathbb{Q}_\lambda} ,
$$
and quadratic variation
$$
\langle \tilde{S}_t \rangle = \int_0^t \sigma^2 \tilde{S}_s^{2 (\beta+1) } ds\ .
$$

Subsequently, we can rewrite

\begin{equation}
\begin{array}{ll}
\mathbb{E} \left[ e^{\lambda \hat{Y}_t} \right] 
& = \mathbb{E}_{\mathbb{Q}_\lambda} \exp \left\lbrace \dfrac{\lambda^2 - \lambda}{2} \displaystyle\int_0^t \dfrac{(\mu - r)^2}{\sigma^4} \  \sigma^2 \tilde{S}_s^{-2\beta} ds \right\rbrace
\\\\
& = \mathbb{E}_{\mathbb{Q}_\lambda} \exp \left\lbrace \dfrac{\lambda^2 - \lambda}{2}\ \dfrac{(\mu - r)^2}{\sigma^4} \langle \log \tilde{S}' \rangle_t \right\rbrace,
\end{array}
\label{MGF_LT}
\end{equation}
where $\langle \log \tilde{S}' \rangle_t$ is the quadratic variation of $ \log\tilde{S}'_t$, where $\tilde{S}'_t$ is a CEV process with parameter $\beta' = -\beta$ and any drift. Let us take, for convenience, $\tilde{S}'_t$ with no drift.

We let $V_t = V(\tilde{S}'_t)$ be the instantaneous variance rate of $\tilde{S}'_t$, defined by $V(\tilde{S}'_t) = \sigma^2 (\tilde{S}'_t)^{2\beta'}$. We express $V_t$ as a quadratic drift 3/2 process 
\begin{equation}
dV_t = (p V_t + q V_t^2) dt + \epsilon V_t^{\frac{3}{2}} dW_t^{\mathbb{Q}_\lambda}, \label{variation}
\end{equation}
where $p = 0, \ q = \beta(2 \beta + 1)$, and $\epsilon = -2 \beta$ to apply the Theorem 3 of \cite{carr2007new} to \eqref{MGF_LT}, and find that the Laplace transform of $\langle \log \tilde{S}' \rangle_t$ is given by
\begin{equation}
\mathbb{E}^{\mathbb{Q}_\lambda} \exp \left\lbrace -s \left( \langle \log \tilde{S}' \rangle_T - \langle \log \tilde{S}' \rangle_t \right)
\left|\right. V_t = v \right\rbrace 
= \dfrac{\Gamma (\gamma_s - \alpha_s)}{\Gamma (\gamma_s)} \left( \dfrac{2}{\epsilon^2 h_t} \right)^{\alpha_s} M \left( \alpha_s : \gamma_s : - \dfrac{2}{\epsilon^2 h_t} \right),
\label{LT}
\end{equation}
where 

$$
\begin{array}{ll}
h_t = \displaystyle\int_t^T dt' V_0 = (T-t) \sigma^2 S_0^{-2 \beta},
\\\\
\alpha_s = -\left( \dfrac{1}{2} - \dfrac{q}{\epsilon^2} \right) + \sqrt{\left( \dfrac{1}{2} - \dfrac{q}{\epsilon^2} \right) + 2 \dfrac{s}{\epsilon^2}} 
= \dfrac{\sqrt{1 + 8 s}}{4 \beta} 
+ \dfrac{1}{4 \beta},
\\\\
\gamma_s = 2 \left( \alpha_s + 1 - \dfrac{q}{\epsilon^2	} \right) 
= 1 + 2 \alpha_s - \dfrac{1}{2(\beta+1)}
= \dfrac{\sqrt{1 + 8 s}}{2 \beta}  + 1,
\\\\
M(\alpha_s,\ \gamma_s,\ z) = \displaystyle\sum^\infty_{n=0} \dfrac{(\alpha_s)_n}{(\gamma_s)_n}\ \dfrac{z^n}{n!}.
\end{array}
$$
and $(x)_n $ denotes the Pochhammer's symbol: $ (x)_n = \Gamma (x+n) / \Gamma (x)$.

Our objective is to calculate the cumulative distribution function of the state price process to facilitate the consumption optimization process. To achieve this, one can either compute the inverse Laplace transform of the aforementioned expression or opt for Monte Carlo simulations of the paths of $\tilde{S}_t$. We have chosen to work with the latter approach due to its ease of programming.

We set $r_t = S_t^{2-\beta^*}$ with $\beta^*=2(\beta+1)$ and apply It\^o's formula to find that $r_t$ satisfies the SDE of the following square-root diffusion
%
%
\begin{equation}
dr_t = a (b-r_t)dt + \sigma' \sqrt{r_t} dW_t,
\end{equation}
where $a = -(\mu-r)(2-\beta^*)$, 
$b = (\sigma^2/2)(2-\beta^*)(1-\beta^*)/a$ 
and $\sigma' = \sigma (2-\beta^*)$.

We simulate $\tilde{S}_t$ for $1<\beta^*<2$ using the following algorithm presented in \cite{glasserman2003monte} which simulates the square root diffusion
$dr_t = \alpha (b-r_t)dt + \sigma' \sqrt{r_t} dW_t$ on time grid $0 = t_0<t_1<\cdots < t_n$:

\begin{center}
$
\begin{array}{lll}
\textup{Set}
d = 4b\alpha \big / \sigma'^2 
\\\\
\textup{Case 1: } d>1
\\
\textup{for } i = 0,...,n-1
\\
\ \ \ \ c \leftarrow \sigma'^2 (1-e^{-\alpha(t_{i+1}-t_i)}) \Big / (4\alpha) 
\\
\ \ \ \ \lambda \leftarrow r_{t_i} (e^{-\alpha(t_{i+1}-t_i)}) \Big / c
\\
\ \ \ \ \textup{generate } Z \sim N(0,1)
\\
\ \ \ \ \textup{generate } X \sim \chi^2_{d-1}
\\
\ \ \ \ r_{t_{i+1}}\leftarrow c\left[\left( Z + \sqrt{\lambda}\right)^2 + X \right]
\\
\textup{end}\\\\
\end{array}
$

$
\begin{array}{lll}
\textup{Case 2: } d\leqslant 1
\\
\textup{for } i = 0,...,n-1
\\
\ \ \ \ c \leftarrow \sigma'^2 (1-e^{-\alpha(t_{i+1}-t_i)}) \Big / (4\alpha) 
\\
\ \ \ \ \lambda \leftarrow r_{t_i} (e^{-\alpha(t_{i+1}-t_i)}) \Big / c
\\
\ \ \ \ \textup{generate } N \sim \textup{Poisson} (\lambda /2)
\\
\ \ \ \ \textup{generate } X \sim \chi^2_{d+2N}
\\
\ \ \ \ r_{t_{i+1}}\leftarrow cX
\\
\textup{end}
\end{array}
$
\end{center}

Finally, we invert the transformation from $\tilde{S}_t$ to $r_t$ to obtain $\tilde{S}_t = r_t^{1/(2-\beta^*)}$. Utilizing the expression \eqref{density_CEV}, we compute $\xi_T$, and subsequently, we implement the algorithm described in Section \ref{algorithm} with a dependence structure defined by a Clayton copula.

We present an example using the market parameters $\beta = -1/4$, $\mu = 0.03$, \ $r = 0.02, \ \sigma = 0.3$ with $N=10$ periods cf consumption. We fixed the parameter $\mu_{log}=100$.
In Figure~\ref{Fig3D_CEV}, we depict the resulting relationship between the strategy cost, the parameter alpha of the Clayton copula, and the standard deviation of the lognormal distribution, which is different from the parameter $\sigma_{log}$. 

\begin{figure}[h!]
\begin{center}
\includegraphics[
trim = 0cm 0cm 0cm 0cm, clip, 
width=13cm]{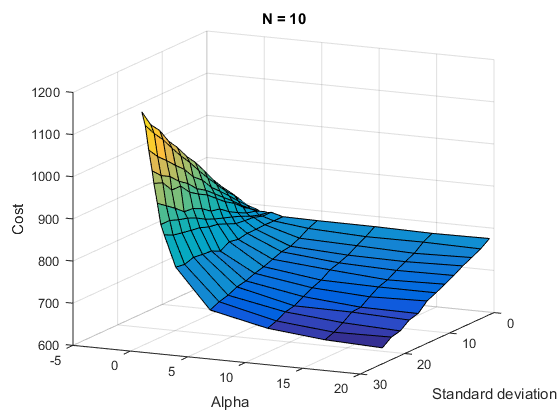}
\caption{Relation between the cost, the parameter alpha, and the standard deviation of the lognormal distribution for $\mu_{log} = 100$ and $N=10$ under CEV model.}\label{Fig3D_CEV}
\end{center}
\end{figure}

It is noteworthy that for negative values of alpha, the standard deviation of the chosen lognormal distribution is inversely proportional to the cost, whereas for positive values, the relationship is proportional, as in the Black-Scholes model. When comparing it with the Black-Scholes model, we observe that under the CEV model with $\beta = -1/4$, the range of the cost is smaller, but the overall shape of the plot remains the same.

We choose a standard deviation equal to $40$ as in Black-Scholes model to see that the value of alpha that achieves the minimum cost is $20$, as we see in Figure~\ref{Fig2D_CEV}.

\begin{figure}[h!]
\begin{center}
\includegraphics[
trim = 0cm 0cm 0cm 0cm, clip, 
width=11cm]{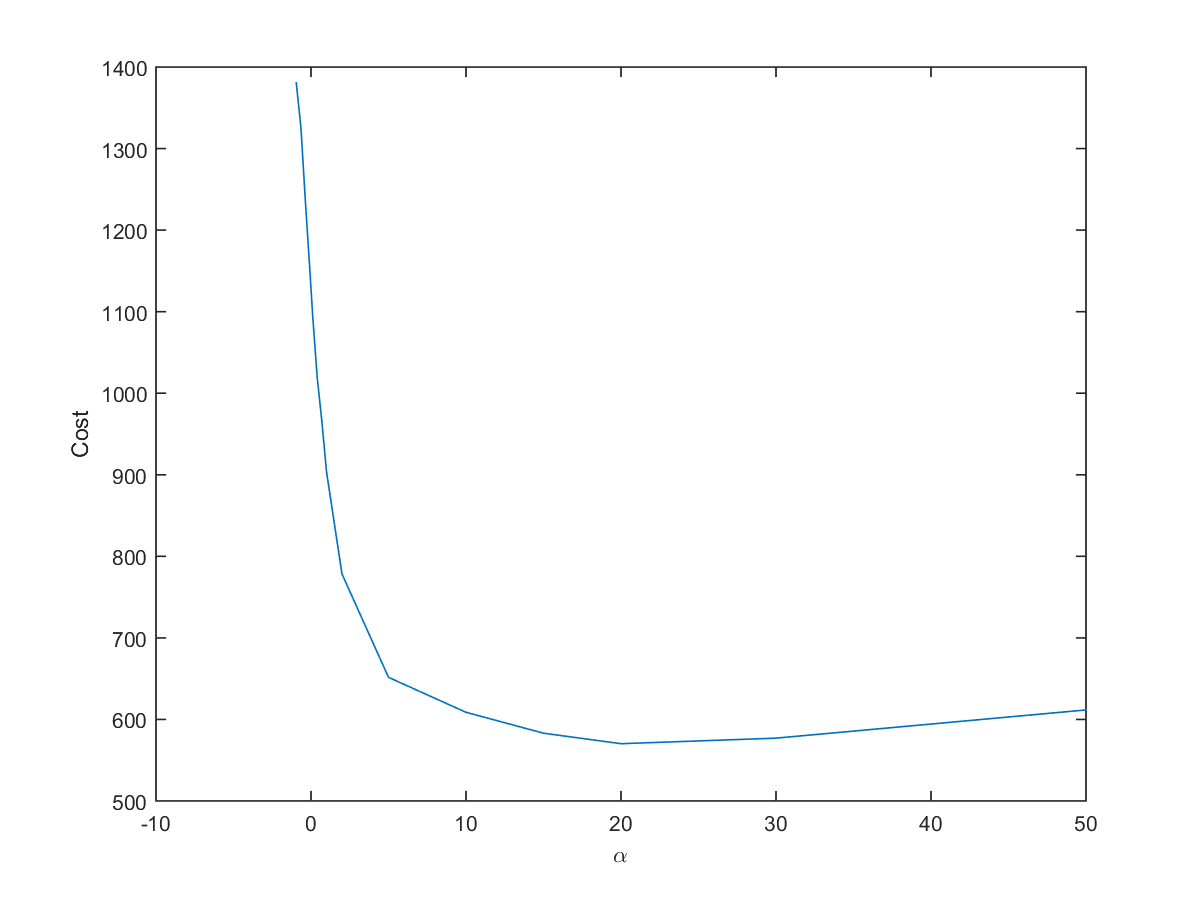}
\caption{Relation between the cost and alpha for a standard deviation of $40$, and $N=10$ under CEV model.}\label{Fig2D_CEV}
\end{center}
\end{figure}

In Figure~\ref{FigCEF_CEV}, we show the \textit{Cost-efficient frontier} for $\alpha=5$ and $\alpha=20$ under the CEV model. We plot the standard deviation as an input and the expected value of the lognormal distribution as an output, which is the expected consumption per period. We have set $N=10$ periods and a fixed budget of 1000.

\begin{figure}[h]
\begin{center}
\includegraphics[
trim = 0cm 0cm 0cm 0cm, clip, 
width=12cm]{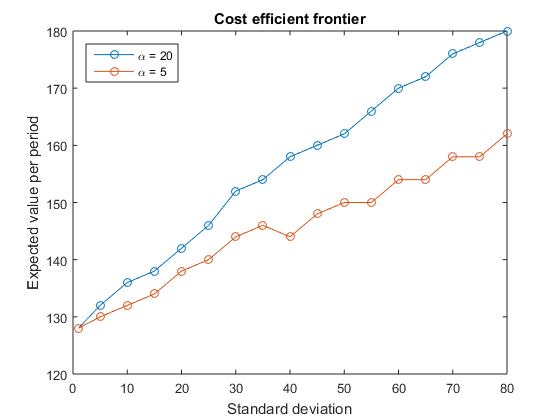}
\caption{Cost-efficient frontier for two values of alpha and $N=10$ under CEV model. The X-axis represent the standard deviation of the target distribution and the Y-axis the expected value per period of the target distribution.}\label{FigCEF_CEV}
\end{center}
\end{figure}

We observe that the expected value is higher for $\alpha=20$ compared to $\alpha=5$. This is related with the fact that the curve in the second case exhibits a slower growth rate. 

What the cost-efficient frontier tells us is that, for example, if we assume a risk of $\$40$, we can expect to win an average of $\$160$ over $10$ periods with a budget of $\$1000$ using a consumption correlation structure given by $\alpha=20$.

\section{Conclusions}

In this paper, we have explored intertemporal cost-efficient consumption, which represents an extension of the Distribution Builder approach to portfolio selection. We have addressed the problem of achieving desired consumption distributions while minimizing costs across multiple time periods.

The incorporation of a dependency structures using copulas is central to our method, these provide an elegant means of representing relationships between multiple variables. In our work, we opt for the Clayton copula due to its simplicity and ability to be modeled with a single parameter.

We have developed an algorithm designed to determine the cost-efficient consumption and offer the methodology to attain the optimal strategy within the Black-Scholes market and the Constant Elasticity of Variance (CEV) market. 

Our findings reveal that in both markets, positive correlated consumption random variables lead to cost-efficient strategies. We have also introduced a cost-efficient frontier for both cases, offering the expected return of each period according to the risk the agent is willing to face.

\textbf{Acknowledgements}\\
This project has received funding from the European Union’s Horizon
2020 research and innovation programme under the Marie
Skłodowska-Curie grant agreement No. 777822.\\


\bibliographystyle{alpha}
\bibliography{CEC_bib}

\begin{thebibliography}{{Dyb}88b}

\bibitem[BBV14]{BBV14}
Carole Bernard, Phelim~P. Boyle, and Steven Vanduffel.
\newblock Explicit representation of cost-efficient strategies.
\newblock {\em Finance}, 35(2):5--55, 2014.

\bibitem[BS22]{BS2022}
Carole Bernard and Stephan Sturm.
\newblock Cost-efficiency in incomplete markets.
\newblock {\em Social Science Research Network}, 2022.

\bibitem[CS07]{carr2007new}
Peter Carr and Jian Sun.
\newblock A new approach for option pricing under stochastic volatility.
\newblock {\em Review of Derivatives Research}, 10(2):87--150, 2007.

\bibitem[{Dyb}88a]{D88a}
Phil {Dybvig}.
\newblock Distributional analysis of portfolio choice.
\newblock {\em Journal of Business}, 61(3):369--393, 1988.

\bibitem[{Dyb}88b]{D88b}
Phil {Dybvig}.
\newblock Inefficient dynamic portfolio strategies or how to throw away a
  million dollars in the stock market.
\newblock {\em Review of Financial Studies}, 1(1):67--88, 1988.

\bibitem[FS11]{FS}
Hans F\"{o}llmer and Alexander Schied.
\newblock {\em Stochastic finance}.
\newblock Walter de Gruyter \& Co., Berlin, extended edition, 2011.
\newblock An introduction in discrete time.

\bibitem[GJS08]{goldstein2008choosing}
Daniel~G. Goldstein, Eric~J. Johnson, and William~F. Sharpe.
\newblock Choosing outcomes versus choosing products: Consumer-focused
  retirement investment advice.
\newblock {\em Journal of Consumer Research}, 35(3):440--456, 2008.

\bibitem[Gla03]{glasserman2003monte}
Paul Glasserman.
\newblock {\em Monte Carlo methods in financial engineering}, volume~53.
\newblock Springer, 2003.

\bibitem[GR01]{genest2001multivariate}
Christian Genest and Louis-Paul Rivest.
\newblock On the multivariate probability integral transformation.
\newblock {\em Statistics \& probability letters}, 53(4):391--399, 2001.

\bibitem[JZ08]{JZ08}
Hanqing {Jin} and Xun~Yu {Zhou}.
\newblock {B}ehavioral {P}ortfolio {S}election in {C}ontinous {T}ime.
\newblock {\em Mathematical Finance}, 18(3):385--426, 2008.

\bibitem[KL14]{klebaner2014stochastic}
Fima Klebaner and Robert Liptser.
\newblock When a stochastic exponential is a true martingale. extension of the
  benes method.
\newblock {\em Theory of Probability \& Its Applications}, 58(1):38--62, 2014.

\bibitem[MU12]{mijatovic2012martingale}
Aleksandar Mijatovi{\'c} and Mikhail Urusov.
\newblock On the martingale property of certain local martingales.
\newblock {\em Probability Theory and Related Fields}, 152(1-2):1--30, 2012.

\bibitem[R{\"u}s09]{R_DT}
Ludger R{\"u}schendorf.
\newblock On the distributional transform, {S}klar's theorem, and the empirical
  copula process.
\newblock {\em J. Statist. Plann. Inference}, 139(11):3921--3927, 2009.

\bibitem[Sha06]{sharpe2006asset}
William~F Sharpe.
\newblock {\em Investors and Markets: Portfolio Choices, Asset Prices, and
  Investment Advice}.
\newblock Princeton University Press, 2006.

\bibitem[Tve75]{tversky1975critique}
Amos Tversky.
\newblock A critique of expected utility theory: Descriptive and normative
  considerations.
\newblock {\em Erkenntnis}, pages 163--173, 1975.

\bibitem[WVS07]{wu2007simulating}
Florence Wu, Emiliano Valdez, and Michael Sherris.
\newblock Simulating from exchangeable archimedean copulas.
\newblock {\em Communications in Statistics, Simulation and Computation},
  36(5):1019--1034, 2007.

\end{thebibliography}

\end{document}